\documentclass[conference]{IEEEtran}

\usepackage{amsmath,amssymb,amsthm,mathtools,enumitem,xcolor,float}
\usepackage{bm}
\usepackage[margin=1in, bottom=1.05in]{geometry}

\newcommand{\changed}[1]{#1}
\newenvironment{changedblock}{}{}

\newtheorem{theorem}{Theorem}[section]
\newtheorem{lemma}[theorem]{Lemma}

\newtheorem{proposition}[theorem]{Proposition}
\newtheorem{corollary}[theorem]{Corollary}

\theoremstyle{remark}
\newtheorem{remark}{Remark}
\newtheorem{definition}{Definition}[section]
\newtheorem{example}{Example}

\DeclareMathOperator{\dom}{dom}

\begin{document}

\title{Latency-Optimal Geo-Distributed Storage over Structured Networks}

\author{
    \IEEEauthorblockN{Madhura Pathegama and Viveck R. Cadambe}
    \IEEEauthorblockA{School of Electrical and Computer Engineering\\
    Georgia Tech 
    USA\\
    \{macharige3,viveck\}@gatech.edu} 
}

\maketitle
\maketitle

\begingroup
\renewcommand{\thefootnote}{}
\footnotetext{\footnotesize
\copyright~2026 IEEE. Personal use of this material is permitted.
Permission from IEEE must be obtained for all other uses, in any current
or future media, including reprinting/republishing this material for
advertising or promotional purposes, creating new collective works, for
resale or redistribution to servers or lists, or reuse of any copyrighted
component of this work in other works.
}
\addtocounter{footnote}{-1}
\endgroup

 \setlength{\abovedisplayskip}{1pt}
 \setlength{\belowdisplayskip}{1pt}

\begin{abstract}
We study latency-optimal file assignment in geo-distributed storage systems modeled as weighted graphs, where edge weights represent communication delays and each node stores one (possibly coded) file. Our goal is to minimize the average time required to retrieve an original file, taken uniformly over all nodes and files. We show that for every fixed number of files $k \geq 3$, computing a latency-minimizing assignment is NP-hard via a reduction from the domatic number problem. On the positive side, we identify natural network topologies that admit uncoded, structured optimal assignments in which, for every node, one can choose its $k$ closest nodes, including itself, so that they store distinct original files. We prove that every weighted tree, certain weighted cycles, and unit-weight graphs with sufficiently large minimum degree admit such assignments. For these graph classes, we provide efficient algorithms to construct latency-optimal file assignments.
\end{abstract}

% \begin{abstract} 
% We study latency-optimal file assignment in geo-distributed storage systems modeled as weighted graphs, where edge weights represent communication delays and each node stores one (possibly coded) file. Our goal is to minimize the average time required to retrieve an original file, taken uniformly over all nodes and files. We show that for any fixed number of files $k \ge 3$, computing a latency-minimizing assignment is NP-hard via a reduction from the domatic number problem. On the positive side, we identify natural network topologies that admit greedy assignments—a class of uncoded, structured optimal assignments in which each node's $k$ nearest neighbors store distinct original files. We prove that every weighted tree, certain weighted cycles, and unit-weight graphs with sufficiently large minimum degree support greedy assignments. For these graph classes, we provide efficient algorithms to construct latency-optimal file assignments.
% \end{abstract}

\section{Introduction}

Geo-distributed storage systems are widely used in applications such as cloud storage and content delivery networks, where data is stored across geographically dispersed nodes~\cite{baron2026acos,emara2023geographically,uluyol2020near}. 
In such systems, due to the large inter-node latencies, the placement and encoding of data directly affect retrieval latency, since users typically prefer accessing nearby nodes but may need to contact remote locations depending on the storage scheme~\cite{acharya2024existence,uluyol2020near}. 
Designing latency-efficient storage is therefore of interest to both the theory and practice of distributed data storage.

To formalize this problem, the notion of latency-optimal file assignments in geographically distributed storage networks was introduced in~\cite{acharya2024existence}. 
In this setting, $k$ \emph{original files} are stored across $n$ nodes ($n > k$), where each node stores a single \emph{coded file}, which may be either an original file or a linear combination of the original files. 
The network is modeled as a weighted graph, with edge weights representing round-trip times, which we use as our notion of latency.
When a node wishes to retrieve an original file, it contacts other nodes and downloads their stored coded files, from which it reconstructs the desired file.

Our goal is to minimize the average latency of the network, defined as the average time required to retrieve an original file, taken uniformly over all nodes and all original files. The following example illustrates this notion.

\begin{example}\label{ex: sq}
Consider a four-node cycle $\{A,B,C,D\}$ with unit edge lengths and three original files $\{W_1,W_2,W_3\}$. We assign the files
\[
    W_1,\; W_2,\; W_3,\; W_1 + W_2 + W_3
\]
to nodes $A,B,C,D$, respectively, as shown below.

\begin{figure}[H]
    \centering
    \includegraphics[width=1\linewidth, trim={5mm 7mm 5mm 5mm}, clip]{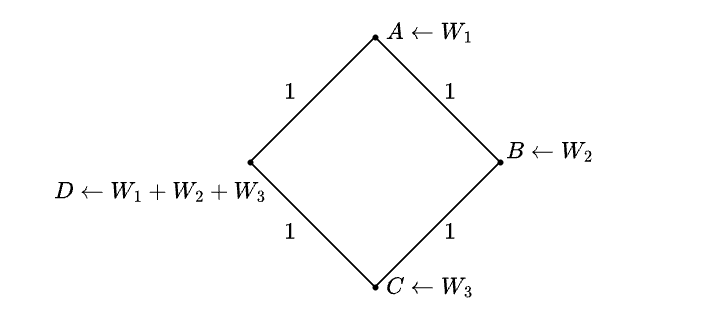}
    \caption{Example of a four-node network with three original files.}
    \label{fig:cycle-example}
\end{figure}

We now compute the average latency by determining the time required to recover each file at each node, and then averaging over all node--file pairs.

Node $B$ accesses $W_2$ locally with latency $0$, and retrieves $W_1$ and $W_3$ from its neighbors in time $1$.  
Node $A$ accesses $W_1$ locally, retrieves $W_2$ from $B$ in time $1$, and obtains $W_3$ in unit time by requesting $W_2$ from $B$ and $W_1+W_2+W_3$ from $D$ simultaneously, then canceling $W_1$ and $W_2$; node $C$ is symmetric.  
Node $D$ recovers $W_1$ and $W_3$ from its neighbors $A$ and $C$, and recovers $W_2$ by subtracting them from its locally stored coded file, all in unit time.  

Thus, each (node, file) pair has latency either $0$ or $1$, and a direct count shows that the resulting average latency is $3/4$.
\end{example}

\changed{We call an assignment \emph{uncoded} if every node stores a nonzero scalar multiple of one original file, and \emph{coded} otherwise. The assignment in Example~\ref{ex: sq} is therefore a coded assignment. }

A natural question is whether we can efficiently compute a latency-minimizing file assignment given the network graph. 
We show that the answer is negative for $k \geq 3$ files (Sec.~\ref{sec: impos}) via a reduction from the domatic number problem, which is NP-hard.

On the positive side, we show that certain graph classes (weighted trees, certain weighted cycles, and unit-weight graphs with large minimum degree) admit a type of uncoded, structured optimal assignment that we call a \emph{greedy assignment}: \changed{for every node, one can choose its $k$ closest storage locations, including the node itself, such that they collectively store all $k$ original files.}
We formalize this notion in Section~\ref{sec: model} and characterize the admitting classes in Section~\ref{sec: classes}.

% On the positive side, we \changed{identify several graph classes for which assignments that minimize latency can be constructed efficiently. These include weighted trees, certain weighted cycles, and graphs with unit edge weights and sufficiently large minimum degree. The resulting assignments are uncoded and have a simple local structure: for every node, one can choose its $k$ closest storage locations, including the node itself, so that together they store all $k$ original files. We formalize this notion in Section~\ref{sec: model} and characterize the admitting classes in Section~\ref{sec: classes}.}

\subsection{Related Work}

Our problem formulation builds on the framework introduced by Acharya, Kumar, and Cadambe in~\cite{acharya2024existence}, where the authors provide a graph-theoretic characterization of greedy assignments. 
Subsequent work~\cite{acharya2025chromatic} studies low average-latency codes among schemes achieving optimal per-node worst-case latency, while~\cite{acharya2025latency} incorporates file--node preference weights under weighted demand models. 
\changed{None of these works, however, address the computational complexity studied here or identify the structured graph families for $k\geq 3$ established in this paper.}

Related work on latency minimization in geo-distributed systems~\cite{pu2015low,malekimajd2015minimizing,narayanan2017right,uluyol2020near} focuses on system-level design aspects such as resource allocation and protocol optimization, rather than the fundamental limits and structure of file placement that we study here.

Another line of work considers graph-constrained storage codes~\cite{mazumdar2015storage,yu2014locally,patra2022node} with emphasis on worst-case metrics such as repair bandwidth and code distance. 
In contrast, we focus on average-case latency and structural properties governing latency-optimal assignments across network topologies.

We emphasize that redundancy in our setting serves to reduce retrieval latency, not to provide fault tolerance, distinguishing our work from distributed erasure coding~\cite{dimakis2010network,ramkumar2022codes}. 
Our problem shares similarities with coded caching~\cite{maddah2014fundamental,ji2015fundamental}, which also targets latency reduction under memory constraints. 
However, coded caching assumes a central server broadcasting coded content to users with local caches over a shared link, whereas we consider distributed storage across a weighted network graph with point-to-point file retrieval, optimizing average latency over arbitrary network paths.

\section{System Model and Latency Metrics}\label{sec: model}

We consider a network of $n$ servers, denoted by $V = \{1,2,\dots,n\}$, modeled as a connected weighted graph $G = (V, E)$. 
Each vertex represents a server and each edge weight specifies the (symmetric) round-trip communication latency between servers.

\changed{Nodes may relay data through intermediate servers. We therefore define $\tau(i,j)$ as the minimum, over all paths from $i$ to $j$, of the sum of the edge weights along the path.} Since $\tau$ defines a metric on the graph, we refer to $\tau(i,j)$ as the \emph{distance} between $i$ and $j$ for notational simplicity. We assume all edge weights are strictly positive, so $\tau(i,j) > 0$ for all $i \neq j$.

There are $k < n$ distinct information files $W_1, W_2, \dots, W_k$, each belonging to a vector space over a finite field. Each node $i \in V$ stores a single \emph{coded file} 
\[
    X_i = \sum_{j=1}^{k} g_{ij} W_j,
\]
where $g_{ij}$ are coefficients from the same finite field. An \emph{assignment} is a collection $X = (X_1, \dots, X_n)$. We call $X$ \emph{uncoded} if each $X_i$ is \changed{a nonzero scalar multiple of one original file}, and \emph{coded} otherwise. \changed{We assume that the stored files jointly span all $k$ original files.}

To retrieve file $W_j$, node $i$ contacts a subset $S \subseteq V$, downloads the coded files $\{X_\ell : \ell \in S\}$, and reconstructs $W_j$ via linear decoding. The retrieval latency is
\[
    \ell_j^{(i)} = \min_{S \subseteq V} \left\{ \max_{\ell \in S} \tau(i, \ell) \,:\, W_j \in \mathrm{span}\{X_\ell : \ell \in S\} \right\}.
\]
The \emph{average latency} of assignment $X$ is defined as
\begin{align}\label{eq:Lavg-def}
    L_{\mathrm{avg}}(X) 
    \;:=\; \frac{1}{nk}\sum_{i=1}^{n}\sum_{j=1}^{k} \ell_j^{(i)}.
\end{align}
% We also define the $p$-th power-mean latency for $p \ge 1$:
% \[
%     L^{(p)}(X)
%     \;=\; \left[\frac{1}{nk} 
%         \sum_{i=1}^{n} \sum_{j=1}^{k} \bigl(\ell_j^{(i)}\bigr)^p
%       \right]^{1/p}.
% \]

\paragraph{Distance profile and $k$-sites}
Fix a node $i \in V$. For $j \in \{1,\dots,n\}$, let $\lambda^{(i)}_j$ denote the distance from $i$ to its $j$-th closest node in $G$, with ties broken arbitrarily. By convention, $\lambda^{(i)}_1 = 0$ since node $i$ is closest to itself.

We introduce the following important definition.
\begin{definition}[$k$-site]\label{def:k-site}
\changed{A} \emph{$k$-site} of node $i$, denoted $S_k^G(i)$ (or simply $S_k(i)$), is \changed{any set of} $k$ closest nodes to $i$ in $G$, including $i$ itself.
\end{definition}

To reconstruct $j$ distinct files at node $i$, the node must download at least $j$ linearly independent coded files, which requires contacting at least $j$ distinct nodes. Thus, for all $i \in \{1,\dots,n\}$ and $j \in \{1,\dots,k\}$,
\begin{align}\label{eq:l-lambda-basic}
    \ell_{\pi^{(i)}_j}^{(i)} 
    \ge \lambda_j^{(i)}.
\end{align}
Here $\pi^{(i)}$ is a permutation of $\{1,\dots,k\}$ that orders the latencies at node $i$ non-decreasingly:
\[
    \ell_{\pi^{(i)}_1}^{(i)} \le \ell_{\pi^{(i)}_2}^{(i)} \le \cdots \le \ell_{\pi^{(i)}_k}^{(i)}.
\]
Summing over all nodes and files yields the lower bound from~\cite{acharya2024existence}:
\begin{align}\label{eq:Lavg-bound}
    L_{\mathrm{avg}}(X) 
    \;\ge\; 
    \frac{1}{nk}
    \sum_{i=1}^{n} \sum_{j=1}^{k} \lambda_j^{(i)},
\end{align}
which averages the distances within each node's $k$-site.

\paragraph{Greedy assignments}
In general, deciding whether a given assignment minimizes $L_{\mathrm{avg}}(X)$ is difficult. Therefore, we focus on assignments that achieve equality in~\eqref{eq:Lavg-bound}.

\begin{definition}\label{def:greedy}
An assignment $X$ is \emph{greedy} if for every node $i$ and every $j \in \{1,\dots,k\}$, $\ell_{\pi^{(i)}_j}^{(i)} = \lambda_j^{(i)}$, or equivalently, 
$\{\lambda_1^{(i)}, \dots ,\lambda_k^{(i)}\} = \{\ell_1^{(i)}, \dots, \ell_k^{(i)}\}$ as multisets.
\end{definition}

\begin{remark}
If a greedy assignment exists, it achieves the minimum possible average latency by~\eqref{eq:Lavg-bound}. By a similar argument, greedy assignments also minimize the $p$-th power-mean latency $L^{(p)}(X) := \left[\frac{1}{nk} \sum_{i=1}^{n} \sum_{j=1}^{k} (\ell_j^{(i)})^p\right]^{1/p}$ for all $p \ge 1$, as well as the per-node worst-case latency.
\end{remark}

The following characterization (adapted from Proposition~3 of~\cite{acharya2024existence}) explains the term ``greedy.''

\begin{proposition}\label{prop:greedy}
An assignment $X$ is greedy if and only if, for every node $i$, \changed{there exists a $k$-site $S_k(i)$ whose nodes store one uncoded copy of each original file}.
\end{proposition}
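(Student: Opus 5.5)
The plan is to prove the two directions separately. For the forward direction, the key observation is that the condition $\lambda^{(i)}_1 = 0$ alone forces every node to store an uncoded file. After that, everything reduces to counting distinct files inside balls around $i$.

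\emph{($\Leftarrow$)} Suppose that for every node $i$ some $k$-site $S_k(i)=\{v_1,\dots,v_k\}$ exists, ordered by distance from $i$, in which $X_{v_j}$ is a nonzero multiple of a distinct original file. Then node $i$ can recover the file stored at $v_j$ by contacting $v_j$ alone, with latency $\tau(i,v_j)=\lambda^{(i)}_j$. Since the $k$ files are distinct, this gives an assignment of files to latencies under which the sorted latencies at $i$ are at most $\lambda^{(i)}_1,\dots,\lambda^{(i)}_k$ componentwise. Combining this with the lower bound \eqref{eq:l-lambda-basic} gives equality for every $j$, so $X$ is greedy.

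\emph{($\Rightarrow$)} Assume $X$ is greedy. The first step shows that all nodes store uncoded files. Since $\lambda^{(i)}_1=0$, some file $W_j$ has $\ell^{(i)}_j=0$. All edge weights are strictly positive, so the only set $S$ with $\max_{\ell\in S}\tau(i,\ell)=0$ is $S=\{i\}$. Hence $W_j\in\mathrm{span}\{X_i\}$, so $X_i$ is a nonzero multiple of $W_j$. This holds at every node, so the assignment is uncoded. It follows that the files node $i$ can recover within time $t$ are exactly the distinct original files stored in the ball $B_i(t)=\{v:\tau(i,v)\le t\}$.

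The second step is a counting argument. Greedy means the multiset of latencies at $i$ equals $\{\lambda^{(i)}_1,\dots,\lambda^{(i)}_k\}$. For every threshold $t$, the number of files with latency at most $t$ is therefore
\[
\#\{j\le k:\lambda^{(i)}_j\le t\}=\min\bigl(k,|B_i(t)|\bigr).
\]
Consequently, $B_i(t)$ contains exactly $\min(k,|B_i(t)|)$ distinct original files. Now let $t_1<t_2<\cdots$ be the distinct distances from $i$, and let $r$ be the first index with $|B_i(t_r)|\ge k$.
\begin{itemize}
\item For $s<r$, the ball $B_i(t_s)$ contains $|B_i(t_s)|$ distinct files, so its nodes store pairwise distinct files.
\item The ball $B_i(t_r)$ contains all $k$ files. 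So the shell $B_i(t_r)\setminus B_i(t_{r-1})$ contains at least $k-|B_i(t_{r-1})|$ files that do not already appear in $B_i(t_{r-1})$.
\end{itemize}
Take $B_i(t_{r-1})$ together with one shell node for each such new file. This set consists of $k$ nodes that are $k$ closest to $i$, which is a valid choice of $S_k(i)$ since ties are broken arbitrarily, and its nodes store one uncoded copy of each original file.

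The main obstacle is the possibility of coded files appearing in the $k$-site, for instance a neighbor storing $W_1+W_2$. Such a file would break the direct correspondence between latencies and stored files. The $\lambda_1=0$ argument in the first step of the forward direction removes this obstacle globally, and the ties at the boundary distance $t_r$ are then handled by selecting the shell nodes as above.
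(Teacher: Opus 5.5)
Your proof is correct. The backward direction matches the paper's argument, and so does the first half of the forward direction, where you use $\lambda^{(i)}_1=0$ and positive edge weights to force every $X_i$ to be uncoded.

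You diverge from the paper in how you produce the $k$-site.

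\begin{itemize}
\item The paper picks, for each file $W_j$, a nearest node storing it. Since the assignment is uncoded, that node's distance from $i$ is exactly $\ell^{(i)}_j$. Greediness then says these $k$ distinct nodes have distance multiset $\{\lambda^{(i)}_1,\dots,\lambda^{(i)}_k\}$, and the paper concludes they form a $k$-site.
\item You instead turn greediness into a threshold identity: $B_i(t)$ contains exactly $\min(k,|B_i(t)|)$ distinct files for every $t$. You then build the $k$-site as $B_i(t_{r-1})$ plus one boundary-shell node for each missing file.
\end{itemize}

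The paper's version is shorter. However, it leaves implicit the easy fact that $k$ distinct nodes whose distances realize the multiset $\{\lambda^{(i)}_1,\dots,\lambda^{(i)}_k\}$ form a set of $k$ closest nodes. Your construction makes the $k$-site property hold by design. It also shows explicitly that every node at distance strictly less than $t_r$ stores a distinct file, so the only freedom lies in the shell at distance $t_r$. The cost is the extra counting bookkeeping.

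Two small wording points:
\begin{itemize}
\item The number of new files in the shell is exactly $k-|B_i(t_{r-1})|$, not merely at least that many. Exactness is what makes your set have size exactly $k$.
\item For $k=1$ you get $r=1$, so $B_i(t_0)$ must be read as $\emptyset$.
\end{itemize}
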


\begin{changedblock}
Since $\lambda_1^{(i)}=0$ for every node $i$, greediness implies that every node stores an uncoded file. For a fixed node $i$, choose a nearest node storing each original file. Under greediness, their distances from $i$ are exactly $\lambda_1^{(i)},\ldots,\lambda_k^{(i)}$ as a multiset, so these nodes form a $k$-site of $i$. Conversely, a $k$-site storing one uncoded copy of each original file attains equality in~\eqref{eq:l-lambda-basic} at $i$.
\end{changedblock}

%%%%%%%%%%%%%%%%%%%%%%%%%%%%%%%%%%%%%%%%%%%%%%%%%%%

\section{Hardness of Average-Latency Minimization}\label{sec: impos}

A natural question arising from our model is whether one can efficiently compute an assignment that minimizes average latency. Acharya et al.~\cite{acharya2024existence} showed that for $k = 2$ files, a greedy assignment always exists and can be computed in polynomial time.

\begin{changedblock}
We show that for every fixed $k \ge 3$, deciding whether a greedy $k$-file assignment exists is NP-complete. \footnote{In complexity theory, NP is the class of problems whose solutions can be verified in polynomial time. A problem is NP-hard if solving it efficiently would allow all NP problems to be solved efficiently. A problem is NP-complete if it is both in NP and NP-hard.} This immediately implies NP-hardness of exact average-latency minimization: an optimal assignment is greedy exactly when the lower bound~\eqref{eq:Lavg-bound} is attainable.
\end{changedblock}

\begin{theorem}\label{thm: impos}
\changed{For every fixed integer $k \ge 3$, determining whether a weighted network admits a greedy $k$-file assignment is NP-complete.}
\end{theorem}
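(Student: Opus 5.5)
Membership in NP follows from Proposition~\ref{prop:greedy}. A certificate is an uncoded assignment, i.e.\ a map $c:V\to\{1,\dots,k\}$ with $X_i=W_{c(i)}$. Given $c$, we compute all distances $\tau$ by Dijkstra or Floyd--Warshall. For each node $i$ we then check that the multiset of distances from $i$ to a nearest node storing each file $W_1,\dots,W_k$ equals $\{\lambda^{(i)}_1,\dots,\lambda^{(i)}_k\}$. Equivalently, for each $r=\lambda^{(i)}_j$ at most $k$ nodes lie strictly closer than $r$, and those closer nodes must be colored with distinct files. This takes polynomial time, and the proposition shows no coded certificate is ever needed, since greediness forces every node to be uncoded.

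For hardness I would reduce from domatic number: given a graph $H$, decide whether $V(H)$ can be partitioned into $k$ dominating sets. This problem is NP-complete for every fixed $k\ge 3$, including restricted forms such as regular graphs. The weighted network would be $H$ itself with all edge weights~$1$. Suppose first that $H$ is $(k-1)$-regular. Then every node's closed neighborhood has exactly $k$ nodes at distance $\le 1$ and all other nodes at distance $\ge 2$. So the $k$-site of $i$ is uniquely $N[i]$, and by Proposition~\ref{prop:greedy} a greedy assignment exists iff there is a coloring $c$ that is injective on every $N[i]$. For $(k-1)$-regular graphs this is exactly a partition into $k$ dominating sets: each color must appear in each $N[i]$, and since $|N[i]|=k$ it appears exactly once. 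So the reduction is immediate provided domatic $k$-partition remains NP-complete on $(k-1)$-regular graphs. For $k=3$ this is the edge-coloring/perfect-code-type problem of partitioning a cubic-like graph, known to be hard (Kaplan--Shamir / perfect domatic partition). For general $k$ it is the ``perfect $k$-domatic partition'' problem of $(k-1)$-regular graphs.

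The main obstacle is not the reduction itself but guaranteeing hardness on a class where $k$-sites are forced, i.e.\ where each closed neighborhood has exactly $k$ nodes. If I cannot cite hardness for $(k-1)$-regular graphs, I would instead start from general domatic $k$-partition, which is Garey--Johnson NP-complete for fixed $k\ge 3$, and use weights to emulate neighborhoods. The plan is to pad each vertex $v$ with extra gadget nodes so that the $k$-site of $v$ contains its original neighbors. One would let the $k$-site of each original vertex $v$ consist of $v$, a chosen representative structure, and a clique gadget of $k-1$ nodes attached at weight $\epsilon$, all distinct in color. The gadget cliques are also designed so that their own $k$-sites are satisfiable by any color arrangement consistent with the original vertex. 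The domination condition would then need to be encoded via nodes whose $k$-sites consist of exactly $k$ relevant vertices. Concretely, for each original vertex $v$ of degree $d$, I would attach $k-1-d$ pendant dummy nodes at tiny weights if $d<k-1$. If instead $d\ge k-1$, I would reduce from the version where all degrees are $k-1$, obtained by the standard local replacement.

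I would try the regular-graph citation first, since Kaplan--Shamir show NP-completeness of perfect domatic partitions. I would then verify carefully that tie-breaking in the definition of $k$-site does not allow spurious choices. Indeed, with exact regularity, the set of nodes at distance $\le 1$ from $i$ has exactly $k$ elements, so no ties can arise.
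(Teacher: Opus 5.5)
Your NP-membership argument is fine. Comparing the sorted nearest-copy distances with $\lambda^{(i)}_1,\dots,\lambda^{(i)}_k$ is a correct polynomial-time check for an uncoded certificate. Your ``equivalently'' rephrasing is not actually equivalent, because it ignores whether the nodes at distance exactly $\lambda^{(i)}_k$ supply the missing files, but you do not need it.

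\textbf{The gap is in the hardness part.} Your main route needs domatic $k$-partition to be NP-hard on $(k-1)$-regular graphs. For $k=3$ this fails unless $\mathrm{P}=\mathrm{NP}$. Here $(k-1)$-regular means $2$-regular, i.e.\ disjoint unions of cycles. A $3$-coloring that is injective on every closed neighborhood $\{v_{i-1},v_i,v_{i+1}\}$ forces $c(v_{i+2})=c(v_{i-1})$, so such a coloring exists iff every cycle length is divisible by $3$ (compare Example~\ref{ex: sq} and the cycle proposition in Section~\ref{sec: classes}). Cubic graphs correspond to $k=4$, not $k=3$. Also, the cited Kaplan--Shamir result concerns domatic number on families of \emph{perfect graphs}, as its title says, not perfect domatic partitions of regular graphs. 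For $k\ge 4$ you would still need a precise citation for partitioning $(k-1)$-regular graphs into $k$ perfect codes.

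The fallback does not close the gap. Padding low-degree vertices with pendant dummies can turn no-instances into yes-instances. For example, a path has $\dom\le 2<k$, yet padding it as you describe gives a weighted tree, which admits a greedy assignment by Theorem~\ref{thm:tree}. Moreover, an answer-preserving transformation to instances with all degrees equal to $k-1$ cannot exist for $k=3$ unless $\mathrm{P}=\mathrm{NP}$, by the cycle observation above.

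\textbf{The missing idea is that unique $k$-sites are unnecessary; ties work in your favor.} Proposition~\ref{prop:greedy} only asks for \emph{some} $k$-site storing all $k$ files. Take an arbitrary graph $G$ with unit weights, and dismiss $\delta(G)<k-1$ as trivial no-instances, since $\dom(G)\le\delta(G)+1$. When $\delta(G)\ge k-1$, the $k$-sites of $i$ are exactly the sets $\{i\}\cup T$ with $T$ any $(k-1)$-subset of $N(i)$. Hence some $k$-site of $i$ stores all $k$ files iff $N[i]$ meets every file class. So an uncoded assignment is greedy iff its classes form a partition of $V$ into $k$ dominating sets. Conversely, any domatic partition with at least $k$ classes (merge surplus classes) gives a greedy assignment.

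This is a direct reduction from general domatic $k$-partition, which is NP-complete for each fixed $k\ge 3$. If $G$ is disconnected, join its components with heavy edges; these do not affect any $k$-site because $\delta(G)\ge k-1$. This is the paper's proof, and it needs no regularity, gadgets, or tie-breaking analysis.
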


We prove this by reduction from the \emph{domatic number} problem, a classical NP-hard problem~\cite{kaplan1994domatic}.

\begin{definition}[Dominating set and domatic number]
In an \emph{unweighted} graph $G=(V,E)$, a set $D \subseteq V$ is \emph{dominating} if every vertex is in $D$ or adjacent to a vertex in $D$. A \emph{domatic partition} is a partition $V = D_1 \cup \cdots \cup D_t$ of $V$ into pairwise disjoint dominating sets. The \emph{domatic number} $\dom(G)$ is the maximum $t$ for which such a partition exists.
\end{definition}

Equipped with these notions, we now move on to the proof of Theorem~\ref{thm: impos}.

\begin{proof}
To show NP-completeness, we establish membership in NP and NP-hardness.

Membership in NP is immediate: \changed{a certificate specifies a candidate
assignment and, for every node $u$, a $k$-site $S_k(u)$. We can verify
in polynomial time that each chosen $k$-site contains $k$ distinct
uncoded files. By Proposition~\ref{prop:greedy}, such a certificate
exists} if and only if the assignment is greedy.

For NP-hardness, fix $k \geq 3$ and consider an unweighted graph $G = (V,E)$. 
\changed{Let $\delta(G)$ denote the minimum degree of $G$. Since
$\dom(G)\leq \delta(G)+1$, every instance with
$\delta(G)<k-1$ is immediately a no-instance. Hence, without affecting
NP-hardness, we may restrict attention to graphs satisfying
$\delta(G)\geq k-1$.}

The graph $G$ admits a domatic partition of size $k$ if and only if the corresponding network graph obtained by assigning unit weights to each edge\changed{\footnote{If $G$ is disconnected, its components may be connected using sufficiently large-weight edges. Since $\delta(G)\geq k-1$, these edges do not affect any $k$-site.}} admits a greedy assignment of $k$ files. 

Indeed, in the unit-weight setting, \changed{since $\delta(G)\geq k-1$, every $k$-site consists of the node itself and $k-1$ of its neighbors. Thus,} a greedy assignment exists exactly when \changed{for every node $i$, there
exists a $k$-site $S_k(i)$ containing} $k$ distinct uncoded files, and grouping the vertices by the file they store yields a partition of $V$ into $k$ classes, each of which is a dominating set. Conversely, any domatic partition of size $k$ induces such an assignment by associating a distinct file with each partition class.

Therefore, deciding whether $\dom(G) \geq k$ is equivalent to deciding whether the \changed{corresponding weighted} version of $G$ admits a greedy assignment for $k$ files. In particular, any algorithm that determines the existence of a greedy $k$-file assignment can be used to decide whether $\dom(G) \geq k$. \changed{Since deciding whether $\dom(G)\geq k$ is NP-hard for every fixed $k\geq 3$~\cite{kaplan1994domatic}, this equivalence shows that deciding whether a graph admits a greedy assignment for $k$ files is NP-hard.}
\end{proof}

\begin{corollary}
\changed{For every fixed integer $k \geq 3$, computing an average-latency--minimizing assignment is NP-hard.}
\end{corollary}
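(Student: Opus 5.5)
The plan is a polynomial-time Turing reduction from the decision problem of Theorem~\ref{thm: impos}. Suppose some algorithm $\mathcal{A}$ returns an assignment $X^\star$ minimizing $L_{\mathrm{avg}}$ on any weighted network with $k$ files. Given a network $G$, I would call $\mathcal{A}$ once to obtain $X^\star$. Then I would decide whether $G$ admits a greedy assignment by testing whether
\begin{align*}
L_{\mathrm{avg}}(X^\star)=\frac{1}{nk}\sum_{i=1}^n\sum_{j=1}^k\lambda_j^{(i)}.
\end{align*}
Correctness rests on two facts. If a greedy assignment $X_g$ exists, then by the Remark following Definition~\ref{def:greedy} it attains the lower bound~\eqref{eq:Lavg-bound}, so $L_{\mathrm{avg}}(X^\star)\le L_{\mathrm{avg}}(X_g)$ equals the bound and equality holds. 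Conversely, equality in~\eqref{eq:Lavg-bound} holds only if every termwise inequality~\eqref{eq:l-lambda-basic} is tight, because the bound is a sum of these nonnegative gaps. Tightness everywhere is exactly Definition~\ref{def:greedy}, so $X^\star$ is greedy. Hence the equality test decides the problem of Theorem~\ref{thm: impos}, and any polynomial-time optimizer would decide an NP-complete problem.

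The main obstacle is showing that the test itself runs in polynomial time, i.e., that $L_{\mathrm{avg}}(X^\star)$ and the right-hand side can be computed efficiently.

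\textbf{Right-hand side.} The values $\lambda_j^{(i)}$ come from all-pairs shortest paths (e.g.\ Dijkstra from every node), followed by sorting each row.

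\textbf{Left-hand side, coded case.} Computing $\ell^{(i)}_j$ for a general coded $X^\star$ requires, for each threshold $t$ among the $n$ distances from $i$, checking whether $W_j$ lies in the span of $\{X_\ell:\tau(i,\ell)\le t\}$. This is a rank computation over the finite field, done by Gaussian elimination on the $k\times n$ coefficient matrix, which is polynomial.

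\textbf{Simpler fallback.} I can avoid computing $L_{\mathrm{avg}}$ at all. Check directly whether $X^\star$ is uncoded and, for each node, take the nearest holder of each file and test whether these distances coincide with $\lambda^{(i)}_1,\dots,\lambda^{(i)}_k$ as multisets. This is exactly the verification used for NP membership via Proposition~\ref{prop:greedy}.

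If the output of $\mathcal{A}$ is only required to be optimal, this suffices. Its encoding size is polynomial because each node stores $k$ field coefficients. Therefore a polynomial-time optimizer would yield a polynomial-time decision procedure for the NP-complete problem of Theorem~\ref{thm: impos}, which establishes NP-hardness for every fixed $k\ge 3$.
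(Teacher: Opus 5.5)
Your proposal is correct and takes the same route as the paper, which gets the corollary from Theorem~\ref{thm: impos} via the observation that an optimal assignment is greedy exactly when the lower bound~\eqref{eq:Lavg-bound} is attained, so an optimizer followed by an equality (or direct greediness) check decides whether a greedy assignment exists. Your added details on computing the $\lambda_j^{(i)}$ and $L_{\mathrm{avg}}(X^\star)$ in polynomial time, or verifying greediness of $X^\star$ directly, spell out what the paper leaves implicit.
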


\section{Networks that Admit Greedy Assignments}\label{sec: classes}

Although deciding the existence of greedy assignments in general network graphs is computationally intractable, this hardness does not extend to all network classes. In this section, we identify structured settings in which greedy assignments can be efficiently found. Specifically, we consider weighted tree networks, weighted cycle networks when $k$ divides $n$, and unit-weight graphs with sufficiently large minimum degree.

\subsection{Weighted tree networks}

We show that for tree networks with arbitrary edge weights, greedy assignments always exist and can be efficiently constructed.

\begin{theorem}\label{thm:tree}
    Let $G$ be a weighted tree network. Then, for any $k \le n$, the network supports a greedy file assignment, and such an assignment can be constructed in $O(n^2)$ time.
\end{theorem}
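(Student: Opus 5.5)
We will build the assignment greedily along a root-distance ordering and then show, using the tree metric, that every node's $k$ nearest nodes automatically receive distinct files. Proposition~\ref{prop:greedy} then certifies the assignment as greedy.

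\textbf{Step 1 (ordering).} Fix an arbitrary root $r$. Order the vertices $v_1=r, v_2,\dots,v_n$ so that $\tau(r,v_1)\le \tau(r,v_2)\le\cdots\le\tau(r,v_n)$, with ties broken by a fixed rule such as vertex index. Because all edge weights are strictly positive, every vertex appears after its parent in this order. Hence the earlier vertices $\{v_1,\dots,v_{t-1}\}$ always induce a connected subtree containing the parent of $v_t$.

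\textbf{Step 2 (assignment).} Process the vertices in this order. When $v_t$ is reached, let $P_t$ be the $\min(k-1,t-1)$ earlier vertices closest to $v_t$ in $\tau$, again with a fixed tie-breaking rule. Give $v_t$ an original file not stored by any vertex of $P_t$. Such a file exists because $|P_t|\le k-1$ and there are $k$ files.

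\textbf{Step 3 (greedy property).} Fix a node $i$ and a $k$-site $S_k(i)$, chosen compatibly with the tie-breaking rules. By Proposition~\ref{prop:greedy} it suffices to show that the files on $S_k(i)$ are pairwise distinct. Take $u,w\in S_k(i)$ with $u$ earlier than $w$; we must show $u\in P_w$. Suppose instead that $P_w$ consists of $k-1$ earlier vertices $x$, each with $\tau(w,x)\le\tau(w,u)$ (ties resolved in their favour).

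The key lemma, which I expect to be the main obstacle, is the following: any such $x$ also satisfies $\tau(i,x)\le \lambda_k^{(i)}$, and in fact $x$ can be taken inside $S_k(i)$ under consistent tie-breaking. To prove it, use the tree structure. Let $m$ be the median of $i,u,w$, that is, the unique vertex lying on all three pairwise paths. Two cases then arise.
\begin{itemize}
\item If the path from $w$ to $x$ leaves the $w$--$m$ path before reaching $m$, then $\tau(i,x)\le \tau(i,w)$ plus or minus appropriate terms, and the bound follows from $\tau(w,x)\le\tau(w,u)$.
\item Otherwise $x$ branches off beyond $m$. In this case we use that $x$ precedes $w$ (so $\tau(r,x)\le\tau(r,w)$), together with the four-point condition, to obtain $\tau(i,x)\le\max(\tau(i,u),\tau(i,w))\le\lambda_k^{(i)}$.
\end{itemize}
Granting the lemma, $S_k(i)$ would contain $w$, $u$, and the $k-1$ vertices of $P_w$, where $u\notin P_w$. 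That gives at least $k+1$ distinct vertices, a contradiction. If the general lemma resists, I would first prove it for $i$ lying on the root side of $w$ and then for $i$ in the subtree of $w$, since the geometry simplifies in each case. Ties need care and will be handled by using one global lexicographic tie-breaking rule on (distance, index) throughout.

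\textbf{Step 4 (complexity).} All-pairs distances in a tree can be computed by one DFS per source, taking $O(n^2)$ time in total. Sorting by root distance costs $O(n\log n)$. For each $t$, selecting the $k-1$ nearest earlier vertices by linear-time selection costs $O(n)$, and choosing a missing file costs $O(k)$. The overall running time is therefore $O(n^2)$.
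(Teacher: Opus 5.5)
Your route differs from the paper's. The paper processes vertices parent-first and fills each $k$-site $S_k(v)$ with the files it is missing, using the separator lemma to show that $S_k(v)\setminus S_k(P(v))$ is still unassigned. You instead colour vertices in root-distance order, each avoiding the files of its $k-1$ nearest predecessors. Your reduction in Step~3 is correct: if $u\notin P_w$ then $|P_w|=k-1$, and $P_w\subseteq S_k(i)$ would give $k+1$ vertices. The metric lemma you need is also true. But it carries the entire proof, and your sketch of it fails.

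In the first bullet, the bound does \emph{not} follow from $\tau(w,x)\le\tau(w,u)$. Take edges $im$, $mp$, $pw$ of weight $1$ and $mu$, $px$ of weight $2$. Then $m$ is the median of $i,u,w$, and the path from $w$ to $x$ leaves the $w$--$m$ path at $p$. We have $\tau(w,x)=3<4=\tau(w,u)$, yet $\tau(i,x)=4>3=\tau(i,u)=\tau(i,w)$. Only the depth ordering rules this configuration out, since no root makes both $u$ and $x$ at most as deep as $w$.

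Your two cases are in fact reversed. If the path from $w$ to $x$ passes through $m$, then $\tau(m,x)\le\tau(m,u)$, so $\tau(i,x)\le\tau(i,m)+\tau(m,x)\le\tau(i,u)$ with no depth information at all. If instead $x$ branches off at some $p\neq m$ on the $m$--$w$ path, you need $\tau(p,x)\le\tau(p,w)$, which gives $\tau(i,x)\le\tau(i,w)$. This follows from $\tau(r,x)\le\tau(r,w)$ when $r$ is not in the branch of $x$ at $p$. When $r$ is in that branch, it follows from $\tau(r,u)\le\tau(r,w)$ together with $\tau(w,x)\le\tau(w,u)$, because then $\tau(p,x)\le\tau(p,m)+\tau(m,u)\le\tau(p,w)$.

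Ties are a second unresolved point. The lemma only gives $\tau(i,x)\le\max\{\tau(i,u),\tau(i,w)\}$. The closed ball of radius $\lambda_k^{(i)}$ can contain more than $k$ vertices, so finding $k+1$ vertices in it is not yet a contradiction. You can close this in either of two ways:
\begin{itemize}
\item Do as the paper does: perturb the weights so that all distances are distinct, run your procedure on the perturbed tree, and observe that its $k$-sites remain $k$-sites for the original weights.
\item Check the equality cases of the corrected argument. Each one forces either $\tau(w,x)=\tau(w,u)$, so $x$ was preferred to $u$ in $P_w$ and has smaller index, or an equality of root depths, so the depth-then-index order supplies the needed index comparison with $w$. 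In every case $x$ then precedes $u$ or $w$ in the (distance from $i$, index) order, which is exactly what membership in the lexicographic $k$-site requires.
\end{itemize}
With these two repairs, your argument becomes a valid alternative proof with the same $O(n^2)$ bound.
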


We begin with a general \changed{separator} lemma (not restricted to trees), which we will use to prove Theorem~\ref{thm:tree}.

\begin{changedblock}
\begin{lemma}\label{lem:separator}
Let $G=(V,E)$ be a weighted network graph, and suppose $V=V_1\cup V_2$, $V_1\cap V_2=\{w\}$, and every path between $V_1\setminus\{w\}$ and $V_2\setminus\{w\}$ passes through $w$.

If $u \in V_1$ and the $k$-site $S_k(w)$ is unique, then
\[
    S_k(u) \setminus S_k(w) \;\subseteq\; V_1 \setminus\{w\}.
\]
\end{lemma}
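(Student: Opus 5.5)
The plan is to argue by contradiction and show that if some node of $S_k(u)\setminus S_k(w)$ lay on the far side of the cut vertex $w$, then all $k$ nodes of $S_k(w)$ would be strictly closer to $u$ than that node, which is impossible for a member of a $k$-site of $u$.

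\textbf{Step 1: uniqueness gives a strict distance gap.} First I would translate the hypothesis that $S_k(w)$ is unique into a distance statement. Suppose some $y\in S_k(w)$ and $x\notin S_k(w)$ had $\tau(w,x)\le\tau(w,y)$. Then swapping $y$ for $x$ would give another set of $k$ closest nodes to $w$, contradicting uniqueness. Hence
\[
\tau(w,y)<\tau(w,x)\quad\text{for all } y\in S_k(w),\ x\notin S_k(w),
\]
that is, $\lambda^{(w)}_k<\lambda^{(w)}_{k+1}$. Also $w\in S_k(w)$, since $\lambda^{(w)}_1=0$ and all other distances are strictly positive.

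\textbf{Step 2: set up the contradiction.} Fix any $k$-site $S_k(u)$ of $u\in V_1$, and suppose $x\in S_k(u)\setminus S_k(w)$ with $x\notin V_1\setminus\{w\}$. Because $V=V_1\cup V_2$ and $x\neq w$ (as $w\in S_k(w)$), we get $x\in V_2\setminus\{w\}$.

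\textbf{Step 3: distance through the separator.} This is the only place the separator hypothesis enters, and it is the step needing the most care. I would show
\[
\tau(u,x)=\tau(u,w)+\tau(w,x).
\]
There are two cases:
\begin{itemize}
\item If $u\neq w$, then $u\in V_1\setminus\{w\}$. Every path from $u$ to $x$, in particular a shortest one, passes through $w$. The identity follows by splitting that path at $w$ and using the triangle inequality for the reverse bound.
\item If $u=w$, the identity holds trivially.
\end{itemize}

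\textbf{Step 4: count the nodes closer to $u$ than $x$.} For every $y\in S_k(w)$, the triangle inequality and Step 1 give
\[
\tau(u,y)\le\tau(u,w)+\tau(w,y)<\tau(u,w)+\tau(w,x)=\tau(u,x).
\]
So all $k$ nodes of $S_k(w)$ are strictly closer to $u$ than $x$ is.

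\textbf{Step 5: conclude.} A node belonging to some set of $k$ closest nodes to $u$ can have at most $k-1$ nodes strictly closer to $u$ than itself. Step 4 produces $k$ such nodes for $x$, a contradiction. Therefore $S_k(u)\setminus S_k(w)\subseteq V_1\setminus\{w\}$, as claimed.
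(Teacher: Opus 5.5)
Your proof is correct and follows the paper's argument essentially step for step: rule out $x=w$, place $x$ in $V_2\setminus\{w\}$, use the separator to get $\tau(u,x)=\tau(u,w)+\tau(w,x)$, and combine this with the strict gap coming from uniqueness of $S_k(w)$ to exhibit $k$ nodes strictly closer to $u$ than $x$. The paper only asserts the strict gap and does not treat the case $u=w$ separately, so your explicit derivation of the gap and your handling of $u=w$ are small refinements.
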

\end{changedblock}

\begin{proof}
Suppose toward a contradiction that there exists $u' \in S_k(u) \setminus S_k(w)$ with $u' \notin V_1 \setminus\{w\}$. 

We first observe that $u' \neq w$. Indeed, since $w$ is always its own closest neighbor, we have $w \in S_k(w)$. But $u' \notin S_k(w)$ by assumption, so $u' \neq w$. Therefore, since $u' \notin V_1 \setminus\{w\}$ and $u' \neq w$, we must have $u' \in V_2 \setminus \{w\}$.

Now let $w' \in S_k(w)$ be arbitrary. Since $u' \notin S_k(w)$ and $S_k(w)$ is unique, we have $\tau(w', w) < \tau(u', w)$.

Since $u' \in V_2 \setminus \{w\}$ and $u \in V_1$, any shortest path between $u$ and $u'$ must pass through \changed{the separator} $w$. This gives $\tau(u',u) = \tau(u',w) + \tau(w,u)$. By the triangle inequality, we also have $\tau(w',u) \le \tau(w',w) + \tau(w,u)$. Combining these two equations with the earlier strict inequality yields $\tau(w',u) < \tau(u',u)$.

This shows that every vertex $w' \in S_k(w)$ is strictly closer to $u$ than $u'$ is. Since $|S_k(w)| = k$, there are at least $k$ vertices strictly closer to $u$ than $u'$, contradicting the assumption that $u' \in S_k(u)$.
\end{proof}

Before proving Theorem~\ref{thm:tree}, we introduce some terminology. We model the given tree as a \emph{rooted tree} with a designated root vertex. For any non-root node $v$, the unique path to the root determines its \emph{parent} $P(v)$ (the first vertex on this path) and its \emph{predecessors} (all vertices on this path, including $P(v)$). Conversely, $v$ is a \emph{successor} of $u$ if $u$ is a predecessor of $v$.

We now proceed to the proof.

\begin{proof}
\begin{changedblock}
We first prove the result under the assumption that each node has a unique $k$-site, and then extend it to arbitrary edge weights.
\end{changedblock}

Let $v_1$ be the root of the tree, and label the remaining nodes $(v_i)_{i=2}^n$ so that, for every non-root vertex $v_i$, its parent $P(v_i)$ appears earlier in the sequence and every successor of $v_i$ appears later. For example, such an ordering can be obtained by exploring the tree from the root using a breadth-first or depth-first search and listing each vertex when it is first visited.

We construct a greedy assignment iteratively, processing the vertices in the order $v_1,v_2,\dots,v_n$. The outer loop runs over $i=1,\dots,n$ with $v_i$ as the current vertex, and for each $v_i$ the inner loop runs over the nodes in its $k$-site $S_k(v_i)$. For each $w\in S_k(v_i)$ we do:
\begin{itemize}
    \item if $w$ has not yet been assigned a file, assign to $w$ the smallest-index file not already assigned within $S_k(v_i)$;
    \item if $w$ has already been assigned a file, leave it unchanged.
\end{itemize}
We say that a vertex $v$ is \emph{processed} once the outer-loop iteration with $v$ as the current vertex has completed; in particular, after $v$ is processed, all elements of $S_k(v)$ have been assigned a file.

We first prove the following claim: after vertices $v_1,\dots,v_i$ have been processed,
\begin{itemize}
    \item[(a)] every element of $S_k(v_{i+1})\cap S_k(P(v_{i+1}))$ has been assigned a file, and
    \item[(b)] no element of $S_k(v_{i+1})\setminus S_k(P(v_{i+1}))$ has been assigned a file.
\end{itemize}

For (a), since $P(v_{i+1})$ appears earlier in the traversal order, it is processed before $v_{i+1}$. During the iteration for $P(v_{i+1})$, all vertices in $S_k(P(v_{i+1}))$ are assigned files. Therefore, all vertices in $S_k(v_{i+1})\cap S_k(P(v_{i+1}))$ are already assigned when we begin processing $v_{i+1}$.

For (b), since only $v_1,\ldots,v_i$ have been processed, it suffices to show that, for every $j\in\{1,\ldots,i\}$,
\begin{align}\label{eq:site-disjoint}
    S_k(v_j)\cap\bigl(S_k(v_{i+1})\setminus S_k(P(v_{i+1}))\bigr)=\emptyset.
\end{align}
Let $w=P(v_{i+1})$ and define
\[
    V_2:=\{v_{i+1}\}\cup\{\text{successors of }v_{i+1}\}\cup\{w\},
\]
and
\[
    V_1:=V\setminus(V_2\setminus\{w\}).
\]
(See Fig.~\ref{fig:tree-separator}.) \changed{By construction, $V=V_1\cup V_2$, $V_1\cap V_2=\{w\}$, and every path between $V_1\setminus\{w\}$ and $V_2\setminus\{w\}$ passes through $w$.} Moreover, $v_1,\ldots,v_i$ lie in $V_1$.

For any $j\in\{1,\ldots,i\}$, Lemma~\ref{lem:separator}, applied with $u=v_j\in V_1$, gives
\[
    S_k(v_j)\setminus S_k(w)\subseteq V_1\setminus\{w\}.
\]

\begin{figure}[H]
    \centering
    \includegraphics[width=1\linewidth, trim={5mm 2cm 5mm 1.6cm}, clip]{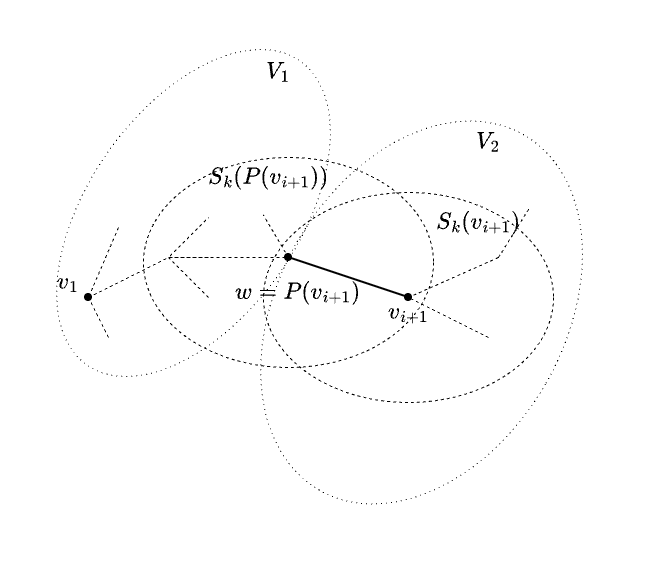}
    \caption{Illustration of $v_{i+1}$, $w$, \changed{the sets} $V_1,V_2$, and the $k$-sites.}
    \label{fig:tree-separator}
\end{figure}

Applying the lemma with the roles of $V_1$ and $V_2$ interchanged and $u=v_{i+1}\in V_2$ gives
\[
    S_k(v_{i+1})\setminus S_k(w)\subseteq V_2\setminus\{w\}.
\]
Since $V_1\setminus\{w\}$ and $V_2\setminus\{w\}$ are disjoint and $w=P(v_{i+1})$, this proves~\eqref{eq:site-disjoint}.

We now prove by induction that after processing $v_i$, the vertices in $S_k(v_i)$ store distinct files. For $i=1$, this holds because we assign all $k$ files within $S_k(v_1)$.

Assume it holds for $v_1,\ldots,v_i$, and consider $v_{i+1}$. The vertices in $S_k(v_{i+1})\cap S_k(P(v_{i+1}))$ already store distinct files by the induction hypothesis, while part (b) of the claim shows that the remaining vertices in $S_k(v_{i+1})$ are unassigned. Assigning them the unused files therefore makes all files in $S_k(v_{i+1})$ distinct, completing the induction.

\begin{changedblock}
It remains to consider a tree with nonunique $k$-sites. Since the tree has finitely many paths, we may perturb its edge weights arbitrarily slightly so that each node has a unique $k$-site while preserving every strict distance comparison under the original weights. Apply the preceding construction to the perturbed tree. For every node, its $k$-site under the perturbed weights remains a valid $k$-site when the original weights are restored, because the perturbation only breaks ties. Hence the same assignment is greedy for the original tree by Proposition~\ref{prop:greedy}.
\end{changedblock}

Finally, once the $k$-sites are precomputed, the assignment stage runs in $O(nk)$ time: the outer loop processes $n$ vertices once each, with each iteration scanning a $k$-site of size $k$. Including the preprocessing time to compute all $k$-sites, the total complexity is $O(n^2)$.
\end{proof}

%%%%%%%%%%%%%%%%%%%%%%%%%%%%%%%%
\subsection{Weighted cycle networks with $k$ dividing $n$}

We next consider weighted cycle networks. As shown in Example~\ref{ex: sq}, not all cycles admit greedy assignments. However, we obtain a simple sufficient condition.

\begin{proposition}
    Let $G$ be a cycle network on $n$ nodes. If $k \text{ divides } n$, then $G$ admits a greedy $k$-file assignment (regardless of edge weights), and such an assignment can be constructed in $O(n)$ time.
\end{proposition}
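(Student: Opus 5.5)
We label the nodes of the cycle $v_0,v_1,\dots,v_{n-1}$ in cyclic order, with indices taken modulo $n$, and use the periodic uncoded assignment
\[
    X_{v_t}=W_{(t \bmod k)+1},\qquad t=0,\dots,n-1.
\]
Clearly it can be written down in $O(n)$ time. Because $k$ divides $n$, the labeling $t\mapsto t \bmod k$ is consistent across the wrap-around from $v_{n-1}$ to $v_0$. Hence \emph{any} $k$ cyclically consecutive nodes $v_s,v_{s+1},\dots,v_{s+k-1}$ store the $k$ distinct original files. By Proposition~\ref{prop:greedy}, it therefore suffices to prove the following purely geometric claim, which does not depend on the edge weights: every node $i$ has a $k$-site that is a contiguous arc of the cycle, i.e.\ $k$ cyclically consecutive nodes.

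To prove the claim, I would first show that metric balls on a weighted cycle are arcs containing their center. Fix $i$ and $r\ge 0$. Let $c(x)$ and $c'(x)$ denote the clockwise and counterclockwise path lengths from $i$ to $x$, so that $\tau(i,x)=\min\{c(x),c'(x)\}$. Along the clockwise traversal from $i$, $c$ is strictly increasing, and along the counterclockwise traversal, $c'$ is strictly increasing. Consequently
\[
    \{x: c(x)\le r\}\quad\text{and}\quad\{x:c'(x)\le r\}
\]
are a clockwise initial segment and a counterclockwise initial segment starting at $i$. Their union, the closed ball $B_{\le}(r)$, is therefore an arc containing $i$ (possibly the whole cycle). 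The same argument shows that the open ball $B_<(r)$ is an arc containing $i$.

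Now set $r=\lambda_k^{(i)}$, let $A=B_<(r)$ with $m=|A|<k$, and let $B=B_{\le}(r)\supseteq A$ with $|B|\ge k$. Every node of $B\setminus A$ is at distance exactly $r$ from $i$. For nested arcs $A\subseteq B$ sharing the point $i$, the set $B\setminus A$ consists of at most two sub-arcs, each attached to one endpoint of $A$. We can therefore extend $A$ step by step by adjoining a node of $B\setminus A$ adjacent to the current arc until the arc has exactly $k$ nodes. The result is a set of $k$ consecutive nodes consisting of all nodes at distance $<\lambda_k^{(i)}$ together with $k-m$ nodes at distance exactly $\lambda_k^{(i)}$. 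Any such set is a valid $k$-site of $i$.

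The main obstacle is this tie-handling step, namely making sure the $k$-site can be chosen contiguous even when several nodes are tied at distance $\lambda_k^{(i)}$ on possibly non-adjacent parts of the cycle. The nested-arc argument above is how I would handle it. A minor point is the case where $B$ is the whole cycle, so that $B\setminus A$ is a single arc joining both ends of $A$; the extension argument still applies because we may add nodes from either end of $A$. Since we only need existence of a contiguous $k$-site, no $k$-sites need to be computed, and the construction runs in $O(n)$ time.
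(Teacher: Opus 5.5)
Your proposal is correct and follows the paper's route: the same periodic assignment $((i-1)\bmod k)+1$, reduced via Proposition~\ref{prop:greedy} to the fact that every node has a $k$-site consisting of $k$ cyclically consecutive nodes. The paper only asserts that contiguity fact, and your ball-is-an-arc and nested-arc extension argument, including the tie cases, correctly supplies the step the paper leaves implicit.
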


\begin{proof}
Assign to vertex $i$ the file indexed by \changed{$((i-1) \bmod k) + 1$}, so that the $k$ files repeat periodically along the cycle. For any node $i$, \changed{a $k$-site can be chosen as} $k$ consecutive nodes, which store distinct files by construction.
\end{proof}

\subsection{Unit-weight graphs with large minimum degree}

So far, the network classes we have considered were distinguished by their topology (trees, cycles). We now focus on unit-weight graphs and impose conditions on connectivity via the minimum degree.

As discussed in the proof of Theorem~\ref{thm: impos}, a unit-weight graph $G$ admits a greedy $k$-file assignment whenever $k \le \dom(G)$. Feige et al.~\cite[Theorem~1]{feige2000approximating} proved that for any graph $G$ with minimum degree $\delta$,
\[
    \dom(G) \;\ge\; (1-O(\ln \ln n /\ln n))\,\frac{\delta+1}{\ln n}.
\]
This yields the following.

\begin{proposition}
Let $G$ be a unit-weight network graph with minimum degree $\delta$. Then, for any
\[
    k \;\le\; (1-O(\ln \ln n /\ln n))\,\frac{\delta+1}{\ln n},
\]
the graph $G$ admits a greedy $k$-file assignment. 
\end{proposition}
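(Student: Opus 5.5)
The plan is to combine the Feige et al.\ lower bound on the domatic number with the unit-weight correspondence between domatic partitions and greedy assignments from the proof of Theorem~\ref{thm: impos}.

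\emph{Step 1: the bound forces $k \le \delta+1$.} Suppose $k$ satisfies the stated bound. By \cite[Theorem~1]{feige2000approximating} we get $k \le \dom(G)$. We also have the trivial bound $\dom(G) \le \delta+1$: every dominating set must meet the closed neighborhood of a minimum-degree vertex, and that neighborhood has $\delta+1$ elements. Hence $k \le \delta+1$, i.e., $\delta \ge k-1$.

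\emph{Step 2: describe the $k$-sites.} With unit weights, $\delta \ge k-1$ means every node $i$ has at least $k-1$ neighbors at distance $1$. All other nodes are at distance at least $2$. So we can choose a $k$-site $S_k(i)$ consisting of $i$ together with any $k-1$ of its neighbors. More precisely, the $k$-sites of $i$ are exactly these sets, since $\lambda^{(i)}_j \le 1$ for $j \le k$.

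\emph{Step 3: build the assignment.} Take a domatic partition $V = D_1 \cup \cdots \cup D_t$ with $t = \dom(G) \ge k$. Merge $D_k, \dots, D_t$ into a single class; a superset of a dominating set is still dominating, so we now have $k$ disjoint dominating classes. Store $W_m$ uncoded at every node of the $m$-th class.

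\emph{Step 4: verify greediness.} Fix a node $i$. For each file $W_m$, either $i$ stores $W_m$, or some neighbor of $i$ does, because the $m$-th class dominates $i$. Let $i$ itself represent its own file, and pick one neighbor for each of the remaining $k-1$ files. These are $k$ distinct nodes: $i$ and $k-1$ neighbors storing different files. This set is a valid $k$-site of $i$ containing one uncoded copy of each original file. By Proposition~\ref{prop:greedy}, the assignment is greedy.

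There is no real obstacle here. The only point needing care is Step 1, namely confirming that the hypothesis implies $\delta \ge k-1$. Without it, a $k$-site could include nodes at distance $2$, and the domatic-partition correspondence would fail. Since the greedy assignment is obtained from $k \le \dom(G)$ via the trivial inequality $\dom(G)\le\delta+1$, this condition holds automatically. If the $O(\cdot)$ term leaves any ambiguity for small $n$, I would interpret the bound as in \cite{feige2000approximating}, i.e., as implying $k \le \dom(G)$, and use that directly.
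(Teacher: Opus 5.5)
Your proposal is correct and follows the paper's argument: you combine the Feige--Halld\'orsson--Kortsarz bound, which gives $k\le\dom(G)$, with the unit-weight correspondence between size-$k$ domatic partitions and greedy $k$-file assignments from the proof of Theorem~\ref{thm: impos}. Your explicit checks make precise details the paper leaves implicit: that $\dom(G)\le\delta+1$ forces $\delta\ge k-1$, so every $k$-site is $i$ together with $k-1$ of its neighbors, and that surplus classes can be merged when $\dom(G)>k$.
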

Moreover, such an assignment can be found in polynomial time by computing a domatic partition in this regime~\cite{feige2000approximating}.

\section{Discussion and Future Work}

This work identifies several structured network classes (trees, certain cycles, and unit-weight graphs with large minimum degree) that admit greedy assignments.
A natural next step is to identify richer classes of network topologies that admit greedy assignments, particularly those reflecting practical geo-distributed deployments.

More broadly, our goal is to understand and approximate $L_{\mathrm{avg}}$-minimizing assignments in regimes where coding offers a meaningful advantage for latency reduction.
This includes deriving tighter bounds on optimal average latency and developing efficient algorithms for low-latency assignments in large-scale networks.

\vspace{2pt}
{\sc Acknowledgment:} This research was supported by NSF under award number 2516418.

\bibliographystyle{IEEEtran}
\bibliography{ref_geo_dist}

@article{kaplan1994domatic,
  author  = {Haim Kaplan and Ron Shamir},
  title   = {The Domatic Number Problem on Some Perfect Graph Families},
  journal = {Information Processing Letters},
  volume  = {49},
  number  = {1},
  pages   = {51--56},
  year    = {1994},
  doi     = {10.1016/0020-0190(94)90054-X}
}

@inproceedings{feige2000approximating,
  title={Approximating the domatic number},
  author={Feige, Uriel and Halld{\'o}rsson, Magn{\'u}s M and Kortsarz, Guy},
  booktitle={Proceedings of the thirty-second annual ACM symposium on Theory of computing},
  pages={134--143},
  year={2000}
}

@article{dimakis2010network,
  title={Network coding for distributed storage systems},
  author={Dimakis, Alexandros G and Godfrey, P Brighten and Wu, Yunnan and Wainwright, Martin J and Ramchandran, Kannan},
  journal={IEEE transactions on information theory},
  volume={56},
  number={9},
  pages={4539--4551},
  year={2010},
  publisher={IEEE}
}

@inproceedings{yu2014locally,
  title={Locally repairable codes over a network},
  author={Yu, Quan and Sung, Chi Wan and Chan, Terence H},
  booktitle={2014 IEEE Information Theory Workshop (ITW 2014)},
  pages={70--74},
  year={2014},
  organization={IEEE}
}

@article{maddah2014fundamental,
  title={Fundamental limits of caching},
  author={Maddah-Ali, Mohammad Ali and Niesen, Urs},
  journal={IEEE Transactions on information theory},
  volume={60},
  number={5},
  pages={2856--2867},
  year={2014},
  publisher={IEEE}
}

@article{mazumdar2015storage,
  title={Storage capacity of repairable networks},
  author={Mazumdar, Arya},
  journal={IEEE Transactions on Information Theory},
  volume={61},
  number={11},
  pages={5810--5821},
  year={2015},
  publisher={IEEE}
}

@article{pu2015low,
  title={Low latency geo-distributed data analytics},
  author={Pu, Qifan and Ananthanarayanan, Ganesh and Bodik, Peter and Kandula, Srikanth and Akella, Aditya and Bahl, Paramvir and Stoica, Ion},
  journal={ACM SIGCOMM Computer Communication Review},
  volume={45},
  number={4},
  pages={421--434},
  year={2015},
  publisher={ACM New York, NY, USA}
}

@article{malekimajd2015minimizing,
  title={Minimizing latency in geo-distributed clouds},
  author={Malekimajd, Marzieh and Movaghar, Ali and Hosseinimotlagh, Seyedmahyar},
  journal={The Journal of Supercomputing},
  volume={71},
  number={12},
  pages={4423--4445},
  year={2015},
  publisher={Springer}
}

@article{ji2015fundamental,
  title={Fundamental limits of caching in wireless D2D networks},
  author={Ji, Mingyue and Caire, Giuseppe and Molisch, Andreas F},
  journal={IEEE Transactions on Information Theory},
  volume={62},
  number={2},
  pages={849--869},
  year={2015},
  publisher={IEEE}
}

@inproceedings{narayanan2017right,
  title={Right-sizing geo-distributed data centers for availability and latency},
  author={Narayanan, Iyswarya and Kansal, Aman and Sivasubramaniam, Anand},
  booktitle={2017 IEEE 37th International Conference on Distributed Computing Systems (ICDCS)},
  pages={230--240},
  year={2017},
  organization={IEEE}
}

@inproceedings{uluyol2020near,
  title = {{Near-Optimal} Latency Versus Cost Tradeoffs in {Geo-Distributed} Storage},
  author={Uluyol, Muhammed and Huang, Anthony and Goel, Ayush and Chowdhury, Mosharaf and Madhyastha, Harsha V},
  booktitle={17th USENIX Symposium on Networked Systems Design and Implementation (NSDI 20)},
  pages={157--180},
  year={2020}
}

@article{ramkumar2022codes,
  title={Codes for distributed storage},
  author={Ramkumar, Vinayak and Balaji, SB and Sasidharan, Birenjith and Vajha, Myna and Krishnan, M Nikhil and Kumar, P Vijay},
  journal={Foundations and Trends in Communications and Information Theory},
  volume={19},
  number={4},
  pages={547--813},
  year={2022},
  publisher={Emerald Publishing Limited}
}

@article{patra2022node,
  title={Node repair on connected graphs},
  author={Patra, Adway and Barg, Alexander},
  journal={IEEE Transactions on Information Theory},
  volume={68},
  number={5},
  pages={3081--3095},
  year={2022},
  publisher={IEEE}
}

@article{emara2023geographically,
  title={Geographically distributed data management to support large-scale data analysis},
  author={Emara, Tamer Z and Trinh, Thanh and Huang, Joshua Zhexue},
  journal={Scientific Reports},
  volume={13},
  number={1},
  pages={17783},
  year={2023},
  publisher={Nature Publishing Group UK London}
}

@inproceedings{acharya2024existence,
  title={On existence of latency optimal uncoded storage schemes in geo-distributed data storage systems},
  author={Acharya, Srivathsa and Kumar, P Vijay and Cadambe, Viveck R},
  booktitle={2024 IEEE International Symposium on Information Theory (ISIT)},
  pages={1462--1467},
  year={2024},
  organization={IEEE}
}

@inproceedings{acharya2025chromatic,
  title={Chromatic Codes for Latency Optimal Geo-Distributed Storage},
  author={Acharya, Srivathsa and Kumar, P Vijay and Cadambe, Viveck R},
  booktitle={2025 IEEE International Symposium on Information Theory (ISIT)},
  pages={1--6},
  year={2025},
  organization={IEEE}
}

@inproceedings{acharya2025latency,
  title={Latency-optimal file assignment in geo-distributed storage with preferential demands},
  author={Acharya, Srivathsa and Kumar, P Vijay and Cadambe, Viveck R},
  booktitle={2025 IEEE Information Theory Workshop (ITW)},
  pages={644--649},
  year={2025},
  organization={IEEE}
}

@inproceedings{baron2026acos,
  title={{ACOS}: {Apple’s} Geo-Distributed Object Store at Exabyte Scale},
  author={Baron, Benjamin and Bousquet, Aline and Metens, Eric and Pimpale, Swapnil and Puz, Nick and de Saint Sauveur, Marc and Muzumdar, Varsha and Ari, Vinay},
  booktitle={24th USENIX Conference on File and Storage Technologies (FAST 26)},
  pages={53--66},
  year={2026}
}

\end{document}